\def\BibTeX{{\rm B\kern-.05em{\sc i\kern-.025em b}\kern-.08em
    T\kern-.1667em\lower.7ex\hbox{E}\kern-.125emX}}
\newcommand{\bxi}{\bm{\xi}}
\newcommand{\bpsi}{\bm{\psi}}
\newcommand{\bmu}{\bm{\mu}}
\newcommand{\bS}{\bm{\lambda}}
\newcommand{\cN}{\mathcal{N}}
\newcommand{\bL}{\bm{x}}
\newcommand{\bA}{\bm{A}}
\newcommand{\cA}{\mathcal{A}}
\newcommand{\bw}{\bm{w}}
\newcommand{\ba}{\bm{a}}
\newcommand{\cF}{\mathcal{F}}
\newcommand{\cE}{\mathcal{E}}
\newcommand{\bK}{\bm{y}}
\newcommand{\bell}{\bm{\ell}}
\newcommand{\bz}{\bm{z}}
\newcommand{\bmm}{\bm{m}}
\newcommand{\E}{\mathbb{E}}
\newcommand{\bbP}{\mathbb{P}}
\newcommand{\bQ}{\mathbb{Q}}
\DeclareMathOperator{\asto}{\stackrel{\text{a.s.}}{\longrightarrow}}
\DeclareMathOperator*{\argmax}{arg\,max}
\newcommand{\iprod}[2]{\left\langle #1,#2 \right\rangle}
\newtheorem{theorem}{Theorem}
\newtheorem{lemma}{Lemma}
\newtheorem{corollary}{Corollary}
\newtheorem{assumption}{Assumption}
\begin{document}

\title{Social Learning under Randomized Collaborations}
\author{\IEEEauthorblockN{Yunus \.Inan, Mert Kayaalp, Emre Telatar, and Ali H. Sayed\thanks{This work was supported in part by SNSF grant 205121-184999.}}
		\IEEEauthorblockA{EPFL, Lausanne, Switzerland}}
\maketitle
\begin{abstract}We study a social learning scheme where at every time instant, each agent chooses to receive information from one of its neighbors at random. We show that under this sparser communication scheme, the agents learn the truth eventually and the asymptotic convergence rate remains the same as the standard algorithms, which use more communication resources. We also derive large deviation estimates of the log-belief ratios for a special case where each agent replaces its belief with that of the chosen neighbor.
\end{abstract}

\begin{IEEEkeywords}
 Social Networks, Distributed Inference, Sparse Communication, Markov Additive Processes
\end{IEEEkeywords}
	\section{Introduction}
	Social learning is a paradigm that investigates how opinions are formed over social networks by modeling the interactions between networked agents to infer the true state-of-nature \cite{chamley_2003,acemoglu_2011,krishnamurthy_2013,chamley_2013}. In non-Bayesian social learning \cite{jadbabaie_2012,zhao_2012,nedic_2017,lalitha_2018,mitra_2021,bordignon_2020,bordignon_2021,kayaalp_2021}, agents perform (i) a \emph{local} Bayesian update based on their private observations and then (ii) combine their neighbors' beliefs to obtain their updated beliefs. A common assumption in these {studies} is that agents communicate with \emph{all} of their neighbors at each time instant. While this assumption helps modeling microblogging social media such as Twitter, it falls short when modeling private and personal communication over social media platforms, such as WhatsApp. In many situations, people exchange beliefs with a subset of their contacts. This might happen because, for example, data may arrive at high rates such that agents might not be able to communicate with all their neighbors between two consecutive arrivals. Furthermore, for designing communication-efficient networked systems, sparse interactions between the devices can be preferred. For instance, consider an agent that attempts to receive data from multiple neighbors. These transmissions are likely to collide and to avoid such issues, each neighbor can be given turns to communicate by the receiver agent, similar to a MAC layer protocol. The above observations motivate us to study the social learning problem when agents update their beliefs based on \emph{only one randomly chosen neighbor} at each {time instant}.
	
	{In the social learning literature, the closest work to ours is \cite{shahrampour_2013}}, where the authors have considered symmetric gossip schemes and shown that the agents learn the truth eventually with high probability. In contrast to \cite{shahrampour_2013}, we consider diffusion \cite{Sayed14} algorithms and the communication is not necessarily symmetric. In particular, we assume full-duplex communication at nodes {such that} at time $i$ agent $1$ can receive data from agent $2$ while agent $2$ receives from agent $3$. Note that diffusion {algorithms} with random neighbor selections are included in \cite{6994875, 6994883} for optimization problems rather than social learning which is an inference paradigm. {These works analyze the expected mean-square error of the estimated parameters across the network whereas in social learning the fundamental problem is to show that agents learn the true state-of-nature eventually. Hence, the results from optimization literature are not directly applicable. Furthermore, in comparison with studies on the standard social learning algorithms, a modification of the strong law of large numbers is not enough to show the truth learning in our setup. The main results of our work are listed below:}\vspace*{0.7\baselineskip}
	\begin{itemize}
		\item [(i)] Agents learn the truth eventually with \emph{probability one}.
		\item [(ii)] Despite the decreased amount of communication, the asymptotic rate of learning is the same as the standard social learning algorithms where agents interact with all their neighbors.
		\item [(iii)] For a special case where agents replace their beliefs with the neighbor's belief, we provide a large deviations analysis of log-belief ratios that only uses the marginal distributions of the data across agents, i.e., the result does not depend on any coupling between agents.
	\end{itemize}
	
	\section{Notation}
	
	Random variables are denoted with boldface letters whereas their realizations are denoted with plain letters (e.g., $\bL_i$ and $x_i$). For a collection of random variables, $\sigma(.)$ denotes the smallest $\sigma$-algebra pertaining to the collection. Sets and events are denoted with script-style letters (e.g., $\cA$). $|\cA|$ denotes the cardinality of set $\cA$. For vectors $u$ and $v$, $\langle u, v\rangle$ denotes the inner product between $u$ and $v$; and $\|u\|_1$, $\|u\|$ denote the $\ell_1$ and $\ell_2$ norms of $u$ respectively. All the logarithms are assumed to be natural logarithms. For a graph $G = (V,E)$, $\cN_k$ denotes the neighbor of vertex $k$ and $\deg(k) \triangleq |\cN_k|$ is the degree of vertex $k$.

\section{Problem Formulation}\label{sec:formulation}

We consider \( K \) agents with peer-to-peer communication constrained on a graph topology. These agents aim to infer the true hypothesis \( \theta^\circ\) among a finite set of hypotheses $\Theta$. The belief \( \mu_{k,i} (\theta) \) quantifies the confidence that agent $k$ has at time $i$ in the proposition ``$\theta = \theta^\circ$". The vector $\mu_{k,i}$ lives in a $|\Theta|$-dimensional probability simplex. The agents observe partially informative and private observations, i.e., agent \( k \) observes \( \bxi_{k,i} \) at time \( i \), which is distributed according to the likelihood/distribution \( L_{k} (\bxi_{k,i}| \theta^\circ) \). Agent \( k \) knows its likelihood functions \( L_{k} (\cdot | \theta) \) for all \( \theta \in \Theta \). We assume data is identically and independently distributed (i.i.d.) across time; but is not necessarily independent across agents.

As opposed to prior works where agents receive beliefs from all their neighbors at each time instant \( i \), in this work, each agent randomly selects one neighbor at each time instant, independent from the past and receives information from that neighbor. We have the following assumption regarding the communication topology.

\begin{assumption}
The combination matrix \( A \triangleq [a_{\ell k }] \) is left-stochastic, i.e., $\sum_{\ell \in \cN_k} a_{\ell k} = 1$, where the entry \( a_{\ell k} \) represents the probability that agent \( k \) chooses agent \( \ell \) to communicate with. The network is also strongly-connected, which means there is a path with positive weights between every agent pair \( (k,\ell )\) and there is at least one agent \( k \) with \( a_{kk} > 0\).
\end{assumption}

Compared with earlier works, where \( a_{\ell k} \) represents the weight agent \( k \) assigns to the belief obtained from agent \( \ell \), in this work, this weight represents the probability that agent $k$ chooses and receives information from agent $\ell$. This random selection procedure decreases the number of transmissions made at each time slot, i.e., the number of transmissions decreases by $\frac1 {K}\sum_{k=1}^K \deg(k)$ times on average compared to the standard algorithm.

\subsection{The Algorithm}
In our algorithm, agents update their beliefs with a \emph{local} Bayesian rule to obtain intermediate beliefs as {in standard social learning algorithms \cite{jadbabaie_2012,zhao_2012,nedic_2017,lalitha_2018}}:
\begin{align}\label{eq:dif_adapt_step}
 \bpsi_{k,i} (\theta) &= \frac{L_k(\bxi_{k,i} | \theta)\bmu_{k,i-1} (\theta)}{\sum_{\theta^\prime}L_k(\bxi_{k,i} | \theta^\prime)\bmu_{k,i-1} (\theta^\prime)}. \qquad \text{(Adapt)}
\end{align}
Then, each agent chooses one of their neighbors and updates its intermediate belief by taking a weighted geometric average with the chosen neighbor's belief. Specifically, for agent $k$ and $\ell \in \cN_k$,
\begin{align}\label{eq:dif_combine_step}
   \bmu_{k,i}(\theta) &= \frac{\bpsi_{k,i}(\theta)^\alpha \bpsi_{\ell,i}(\theta)^{\bar\alpha} }{\sum_{\theta'}\bpsi_{k,i}(\theta')^\alpha \bpsi_{\ell,i}(\theta')^{\bar\alpha}} \text{, with prob. $a_{\ell k}$ (Combine)}
\end{align}
where \( \alpha \in [0,1) \) is a confidence weight and $\bar\alpha \triangleq 1-\alpha$. We assume $\bmu_{k,0}(\theta) > 0$ for all $k$ and $\theta$. Observe that $\alpha = 0$ corresponds to the case where the agent replaces its intermediate belief with the chosen neighbor's belief. If we allow $\alpha = 1$, it would correspond to the non-cooperative mode of operation; hence this case is of no interest to our work.

\section{Analysis of the Algorithm}\label{sec:analysis}
{Recall that $\theta^\circ$ is the true hypothesis and let $\theta \in \Theta \setminus \{\theta^{\circ}\}$ denote an arbitrary hypothesis.} We study the evolution of the log-belief ratios:
\begin{equation}
\bS_{k,i} \triangleq \log\frac{\bmu_{k,i}(\theta^{\circ})}{\bmu_{k,i}(\theta)},
\end{equation}
which can be verified to evolve according to
\begin{equation}\label{eqn:iteration}
\bS_{k,i} = \alpha (\bL_{k,i} + \bS_{k,i-1}) + \bar\alpha (\bL_{\ell,i} + \bS_{\ell,i-1}) \text{, with prob. $a_{\ell k}$}
\end{equation}
where
\begin{equation}
\bL_{k,i} \triangleq \log \frac{L_k(\bxi_{k,i} | \theta^\circ)}{L_k(\bxi_{k,i} | \theta)}
\end{equation}
is the log-likelihood ratio (LLR) of the data calculated by agent $k$ at time $i$. An equivalent way to express \eqref{eqn:iteration} is
\begin{equation}\label{eqn:iterate}
\begin{split}
    \bS_{k,i} &= \langle\bS_i,\bw_0^{(k)}\rangle\\
    &= \langle\bL_i+\bS_{i-1}, (\alpha I + \bar\alpha \bA_i)\bw_0^{(k)}\rangle\\
    = \langle&\bL_i, (\alpha I + \bar\alpha \bA_i)\bw_0^{(k)}\rangle+\langle\bS_{i-1},(\alpha I + \bar\alpha \bA_i)\bw_0^{(k)}\rangle
\end{split}
\end{equation}
where $\bL_i \triangleq \begin{bmatrix}\bL_{1,i} \dots \bL_{K,i}\end{bmatrix}^T$, $\bS_i \triangleq \begin{bmatrix}\bS_{1,i} \dots \bS_{K,i}\end{bmatrix}^T$, $\bw_0^{(k)} \triangleq \begin{bmatrix}0 \dots 1 \dots 0\end{bmatrix}^T$ is an all-zero vector with a single $1$ at its $k^{\text{th}}$ element; and $\bA_i \triangleq [\ba_{\ell k,i}]$ is a random matrix such that $\ba_{\ell k,i} = 1$ if node $k$ chooses to communicate with node $\ell$ at time $i$. Note that $\sum_{\ell}\ba_{\ell k,i} = 1$ surely, i.e., is a left-stochastic matrix, and $\E[\bA_i] = A$ for all $i$. Furthermore, since each node selects its neighbor identically and independently across time, $\bA_1,\dots,\bA_i$ are i.i.d.

Now, define 
\begin{equation}\bw_{n}^{(k)} \triangleq \prod_{j = i-n+1}^i (\alpha I + \bar\alpha \bA_j)\bw_0^{(k)}
\end{equation}
for $1 \leq n \leq i$. Note that $\bw_{n}^{(k)}$ is a probability vector. We iterate \eqref{eqn:iterate} to obtain
\begin{equation}
\bS_{k,i} = \sum_{n = 1}^{i} \langle \bw_{n}^{(k)},\bL_{i-n+1}\rangle.
\end{equation}
Our aim is to show that (with a.s. denoting almost sure convergence)
\begin{equation}\label{eqn:desired}
\frac 1 i \bS_{k,i} \asto \langle \pi,d \rangle ,\end{equation}where $d$ is the divergence vector with its $k^{\text{th}}$ element being the KL divergence $D_{\text{KL}}(L_k(.|\theta^{\circ}) || L_k(.|\theta))$; and $\pi$ is the Perron vector of $A$. Recall that $\iprod{\pi}{d}$, the asymptotic rate of convergence, is the same as the standard algorithm where agents benefit from all neighbors at each time instant \cite{nedic_2017,lalitha_2018}. As an initial step to prove \eqref{eqn:desired}, we first establish the following result.
\begin{lemma}For all $k$,
\begin{equation}\label{eqn:lemma_1}
\frac 1 i \sum_{n = 1}^{i} \bw_{n}^{(k)} \asto \pi.
\end{equation}
\end{lemma}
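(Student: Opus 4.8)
The plan is to turn the Ces\`aro sum into an average of \emph{backward} matrix products and then invoke an ergodic theorem. Write $\bm{M}_j \triangleq \alpha I + \bar\alpha\bA_j$, so the $\bm{M}_j$ are i.i.d.\ column-stochastic matrices with $\E\bm{M}_j = \bar A \triangleq \alpha I + \bar\alpha A$; moreover $\bar A$ is primitive (strong connectivity together with $a_{kk}>0$ for some $k$, or $\alpha>0$, provides both irreducibility and aperiodicity) and $\bar A\pi=\pi$. Re-indexing the sum by $m = i-n+1$ and using $\bw_n^{(k)} = \bm{M}_{i-n+1}\bm{M}_{i-n+2}\cdots\bm{M}_i\,\bw_0^{(k)}$ gives
\begin{equation*}
\frac 1i\sum_{n=1}^{i}\bw_n^{(k)} = \frac 1i\sum_{m=1}^{i}\bm{M}_m\bm{M}_{m+1}\cdots\bm{M}_i\,\bw_0^{(k)}.
\end{equation*}
The point of this rewriting is that, although every summand still depends on $i$, for each fixed $m$ the backward product $\bm{M}_m\cdots\bm{M}_i\,\bw_0^{(k)}$ stabilizes as $i\to\infty$, which is what makes the Ces\`aro average amenable to an ergodic-theorem argument.

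The first and main step is an almost-sure geometric contraction: there exist a deterministic $\beta\in(0,1)$ and an a.s.\ finite random constant $C$ with $\|\bm{M}_m\bm{M}_{m+1}\cdots\bm{M}_i\,(u-u')\|\le C\beta^{\,i-m}$ for all $i\ge m\ge1$ and all probability vectors $u,u'$ (note $u-u'\perp\mathbf{1}$, and column-stochastic matrices are $\ell_1$-nonexpansive on $\mathbf{1}^\perp$). The mechanism: since $\bar A$ is primitive, there are $n_0$ and $p>0$ such that a length-$n_0$ block product $\bm{M}_j\cdots\bm{M}_{j+n_0-1}$ has Dobrushin coefficient bounded away from $1$ with probability at least $p$; these block events are i.i.d., so by the strong law of large numbers a positive fraction of the first $\lfloor i/n_0\rfloor$ blocks are "good," and multiplying the per-block coefficients (the "bad" blocks contributing at most $1$) yields the claimed geometric bound. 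A direct consequence is that, for each $m$, the sequence $(\bm{M}_m\cdots\bm{M}_i\,\bw_0^{(k)})_{i\ge m}$ is Cauchy and converges a.s.\ to a limit $\bm{u}_m$ that is independent of $k$ and satisfies $\bm{u}_m=\bm{M}_m\cdots\bm{M}_i\,\bm{u}_{i+1}$ for every $i\ge m$; since $\bm{u}_m$ is a measurable function of $(\bm{M}_m,\bm{M}_{m+1},\dots)$, the sequence $(\bm{u}_m)_{m\ge1}$ is stationary and ergodic (a factor of the i.i.d.\ shift), and by dominated convergence $\E\bm{u}_m=\lim_{i}\bar A^{\,i-m+1}\bw_0^{(k)}=\pi$.

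The remaining steps are routine. Birkhoff's ergodic theorem gives $\frac 1i\sum_{m=1}^{i}\bm{u}_m\asto\E\bm{u}_1=\pi$, and the error relative to the quantity of interest is killed by the contraction bound: writing $\bm{M}_m\cdots\bm{M}_i\,\bw_0^{(k)}-\bm{u}_m=\bm{M}_m\cdots\bm{M}_i\,(\bw_0^{(k)}-\bm{u}_{i+1})$, each term has norm at most $2C\beta^{\,i-m}$, whence
\begin{equation*}
\Bigl\|\frac 1i\sum_{m=1}^{i}\bm{M}_m\cdots\bm{M}_i\,\bw_0^{(k)}-\frac 1i\sum_{m=1}^{i}\bm{u}_m\Bigr\|\le\frac{2C}{i(1-\beta)}\xrightarrow[i\to\infty]{}0\quad\text{a.s.}
\end{equation*}
Combining the last two facts with the opening display yields \eqref{eqn:lemma_1}.

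I expect the second step --- the almost-sure (geometric) contraction of the random backward products on the simplex --- to be the main obstacle: the individual $\bm{M}_j$ need not contract on $\mathbf{1}^\perp$, so contraction must be extracted from blocks, which is precisely where the primitivity granted by Assumption~1 enters, and where one must be careful to disentangle the two limiting indices $m$ and $i$ appearing in the double sum. If one prefers, this step (and the stationarity/ergodicity bookkeeping) can instead be replaced by a citation to the literature on ergodicity of products of random stochastic matrices; otherwise the block/SLLN argument above keeps the proof self-contained.
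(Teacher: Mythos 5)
Your proposal is viable but takes a genuinely different, and substantially heavier, route than the paper. The paper never needs pathwise ergodicity of the random products: it bounds the second moment of $\tfrac1i\sum_n(\bw_n-\pi)$ through the identity $\E[\bw_n\mid\cF_{m,i}]=(\alpha I+\bar\alpha A)^{n-m}\bw_m$, so only the deterministic geometric mixing of the \emph{mean} matrix $\alpha I+\bar\alpha A$ enters; Chebyshev then gives an $O(1/i)$ in-probability bound, and almost-sure convergence follows from Borel--Cantelli along the subsequence $i^2$ together with the crude interpolation bound $\|\bw_n-\pi\|_1\le 2$. You instead prove almost-sure weak ergodicity of the random backward products, extract the stationary ergodic limits $\bm{u}_m$ with mean $\pi$, and apply Birkhoff; this buys a stronger pathwise statement (each product $\bm{M}_m\cdots\bm{M}_i\,\bw_0^{(k)}$ converges and forgets its initial vector), but the cost is exactly the contraction lemma you flag, and as written it has two soft spots. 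First, primitivity of $\E\bm{M}_1=\alpha I+\bar\alpha A$ does not by itself yield a positive-probability contracting block: different entries of the mean matrix can be positive under different realizations of the neighbor choices, so you must exhibit an explicit finite pattern of choices (using strong connectivity and the self-loop $a_{kk}>0$; for $\alpha=0$ this amounts to a coalescence construction) whose realized block product has Dobrushin coefficient strictly below one. Second, the single a.s.-finite constant $C$ valid for \emph{all} left endpoints $m$ does not follow from a bare SLLN on blocks, because the claim is uniform over $m$; you need a Chernoff bound on the count of good blocks plus Borel--Cantelli over starting indices, or else accept a stationary integrable constant $C_m$ and redo the final Ces\`aro estimate (split the sum at $m=i/2$ and use $\max_{m\le i}C_m/i\asto 0$). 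Both repairs work, and citing the literature on products of random stochastic matrices would also close the step, so your proof can be completed; but the paper's second-moment argument reaches the same conclusion with far less machinery, which is precisely what makes the lemma short.
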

\begin{proof}The statement does not depend on $k$, so without loss of generality we take $k=1$ and omit all the superscripts $(k)$. We first show convergence in probability. From Markov's inequality, we have
\begin{equation}\label{eqn:norm}
\bbP\Bigg( \bigg\lVert \frac 1 i \sum_{n = 1}^{i} \bw_n-\pi \bigg\rVert^2 > \epsilon\Bigg) \leq \frac {\E\bigg[\Big\lVert \sum_{n = 1}^{i} (\bw_n-\pi) \Big\rVert^2\bigg]}{\epsilon i^2}.    
\end{equation}
The expected norm on the right-hand side of \eqref{eqn:norm} is equal to
\begin{multline}\label{eq:up1}
        \E\bigg[ \sum_{n = 1}^{i}\lVert\bw_n-\pi \rVert^2\bigg] + 2 \E\bigg[\sum_{m < n} \langle \bw_m-\pi , \bw_n-\pi \rangle\bigg].
\end{multline}
Define for $m\leq i$, $\cF_{m,i} \triangleq \sigma\big(\{\bA_j\}_{i-m+1\leq j \leq i}\big)$
and observe
\begin{equation}\label{eqn:sub2}
\begin{split}
    \E[\langle \bw_m - \pi, \bw_n- \pi \rangle] &= \E\big[\E[\langle \bw_m - \pi, \bw_n- \pi \rangle|\cF_{{m},i}]\big] \\
    &= \E\big[\langle\bw_m- \pi,\E[\bw_n - \pi|\cF_{m,i}] \rangle\big.]\\
    \end{split}
    \end{equation}
Furthermore, $\bw_n = (\alpha I + \bar \alpha \bA_{i-n+1}) \dots (\alpha I + \bar \alpha \bA_{i-m})\bw_m$ and since $\bA_n$'s are i.i.d., 
\begin{equation}\label{eqn:sub1}
    \begin{split}
        \E[\bw_n|\cF_{m,i}] &= (\alpha I + \bar\alpha A)^{n-m}\bw_m.
    \end{split}
\end{equation}
Substituting \eqref{eqn:sub1} into \eqref{eqn:sub2}, we obtain
    \begin{equation}\label{eqn:mixing}
\begin{split}
    \E[\langle \bw_m& -\pi, \bw_n- \pi \rangle]\\
    &= \E\big[\langle\bw_m-\pi, (\alpha I + \bar\alpha A)^{n-m}(\bw_m-\pi)\rangle\big]\\
    &\stackrel{(a)}{\leq}  \E\big[\lVert\bw_m-\pi\rVert\lVert(\alpha I + \bar\alpha A)^{n-m}(\bw_m-\pi)\rVert\big]
\end{split}
\end{equation}
where $(a)$ follows from Cauchy-Schwarz inequality. For a strongly-connected $A$ and for any $\alpha \in [0,1)$, it is known that there exists a $\rho < 1$ such that $\lVert(\alpha I + \bar\alpha A)^{n-m}(\bw_m-\pi)\rVert \leq C(\rho)\rho^{n-m}$, where $C(\rho)$ is a constant that only depends on $\rho$ \cite[Chapter 4]{LevinPeresWilmer2006}. Hence, \eqref{eqn:mixing} is further upper bounded by
\begin{equation}
\E[\lVert\bw_m-\pi\rVert]C\rho^{n-m}.
\end{equation}
Also note that 
\begin{equation}\label{eqn:triangle_eq}
    \E\big[\lVert\bw_m -\pi \rVert\big] \leq \E\big[\lVert\bw_m\rVert\big]+ \lVert\pi \rVert \leq 2
\end{equation}  
and
\begin{equation}\label{eqn:parallelogram}
    \E\big[\lVert\bw_m -\pi \rVert^2\big] \leq 2(\E\big[\lVert\bw_m\rVert^2\big]+ \lVert\pi \rVert^2) \leq 4.
\end{equation}
Using \eqref{eqn:triangle_eq}, \eqref{eqn:parallelogram} we upper bound \eqref{eq:up1}, and consequently upper bound \eqref{eqn:norm} as
\begin{equation}
\bbP\Bigg( \bigg\lVert \frac 1 i \sum_{n = 1}^{i} \bw_n-\pi \bigg\rVert^2 > \epsilon\Bigg) \leq \frac{4 + 4C\rho/(1-\rho)}{\epsilon i}.
\end{equation}
This shows that $\tfrac 1 i \sum_{n = 1}^{i} \bw_n^{(k)} \to \pi$ in probability. Also note that for a $K$-dimensional vector $u$, $\lVert u\rVert_1 \leq \lVert u\rVert \sqrt{K}$. Therefore
\begin{equation}\label{eqn:Borel}
\bbP\Bigg( \bigg\lVert \frac 1 i \sum_{n = 1}^{i} \bw_n-\pi \bigg\rVert_1 > \epsilon\Bigg) \leq \frac{4\sqrt{K} + 2\sqrt{K}\rho/(1-\rho)}{\epsilon i}.
\end{equation}
The last step of the proof follows by a standard trick to obtain the strong law of large numbers \cite[Chapter 7]{grimmett2001probability}. Let $\bz_i \triangleq  \big\|\sum_{n = 1}^{i} (\bw_n-\pi)\big\|_1$. From Borel-Cantelli lemma \cite[Chapter 2]{Martingales}, the subsequence $\bz_{i^2}/i^2 \asto 0$ --- replace $i$ with $i^2$ in \eqref{eqn:Borel} and observe that the right-hand side is summable. Moreover, observe for all $i^2 \leq m \leq (i+1)^2$:
\begin{equation}
\frac{\bz_m}{m} \stackrel{(a)}{\leq} \frac {\bz_{i^2} + \sum_{n=i^2}^m \|\bw_n-\pi \|_1}{i^2} \stackrel{(b)}{\leq} \frac {\bz_{i^2} + {2}\big((i+1)^2-i^2\big)}{i^2}.
\end{equation}
To obtain $(a)$ we upper bounded the numerator with triangle inequality and lower bounded the denominator by using $m \geq i^2$. $(b)$ holds because $\|w-\pi\|_1 \leq {2}$ for any $w$, and $m \leq (i+1)^2$. Since $\frac{\bz_{i^2}}{i^2} \asto 0$ so does $\frac{\bz_{m}}{m}$.
\end{proof}
Taking the inner product of both sides in \eqref{eqn:lemma_1} with $d$, we obtain
\begin{corollary}
$\tfrac 1 i \sum_{n=1}^i \iprod{\bw_n^{(k)}}{d} \asto \iprod{\pi}{d\,}$ for all $k$.
\end{corollary}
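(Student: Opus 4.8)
The plan is to deduce the corollary directly from Lemma~1, using only that the map $v \mapsto \iprod{v}{d}$ is linear and Lipschitz on $\mathbb{R}^K$, so that it commutes with the Ces\`aro average and with the almost-sure limit. First I would record that $d$ is a fixed deterministic vector with $\lVert d\rVert<\infty$: each entry $\dkl(L_k(\cdot|\theta^\circ)\,\|\,L_k(\cdot|\theta))$ is finite, which is a standard assumption in this setting (an agent's likelihoods share a common support across hypotheses). By linearity of the inner product,
\[
\iprod{\frac 1 i \sum_{n=1}^i \bw_n^{(k)}}{d} \;=\; \frac 1 i \sum_{n=1}^i \iprod{\bw_n^{(k)}}{d},
\]
so it suffices to prove that the left-hand side converges a.s.\ to $\iprod{\pi}{d}$.

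Next I would take the probability-one event on which the vector convergence $\frac 1 i \sum_{n=1}^i \bw_n^{(k)} \to \pi$ of Lemma~1 holds; since there are only $K$ agents, intersecting the corresponding $K$ events still yields a single probability-one event, which handles the ``for all $k$'' uniformly. On that event, Cauchy--Schwarz gives
\[
\Big| \iprod{\frac 1 i \sum_{n=1}^i \bw_n^{(k)} - \pi}{d} \Big| \;\leq\; \Big\lVert \frac 1 i \sum_{n=1}^i \bw_n^{(k)} - \pi \Big\rVert \, \lVert d \rVert \;\longrightarrow\; 0,
\]
and combining this with the linearity identity above yields exactly the claimed limit.

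I expect no genuine obstacle here: beyond the one-line continuity argument, the only thing to verify is the finiteness of $\lVert d\rVert$, i.e., that every KL divergence appearing in $d$ is finite, which is part of the modeling assumptions. If one instead allowed an infinite divergence, the statement would still hold with the convention that the right-hand side is $+\infty$, but then a monotone-convergence argument on nonnegative terms would replace Cauchy--Schwarz; this degenerate case does not arise in the present setting.
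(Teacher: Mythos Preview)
Your proposal is correct and follows exactly the paper's approach: the paper simply says ``taking the inner product of both sides in \eqref{eqn:lemma_1} with $d$,'' and you have spelled out precisely why that one-line step is valid (linearity plus continuity via Cauchy--Schwarz, together with the finiteness of $\lVert d\rVert$). The additional care about intersecting finitely many probability-one events for the ``for all $k$'' part is a welcome rigor that the paper leaves implicit.
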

Lemma 1 suggests that the convergence results will not depend on $k$. Hence, we assume $k = 1$ and omit all the superscripts $(k)$ if not needed. The next step is to show that \eqref{eqn:desired} holds under a square-integrability assumption on divergences. More precisely,
\begin{lemma}\label{lem:martingale}
Suppose $\E[(\bL_{k,i})^2] < \infty$ for all $k$. Then \eqref{eqn:desired} holds.
\end{lemma}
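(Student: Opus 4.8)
The plan is to separate each log-likelihood ratio into its mean and a zero-mean fluctuation, dispose of the mean part with the Corollary, and show the fluctuation part does not affect the $\tfrac1i$-normalized limit. Write $\bL_s = d + \bn_s$ with $\bn_s \triangleq \bL_s - d$; by the hypothesis the $\bn_s$ are i.i.d., zero-mean, and square-integrable, so in particular $d$ is finite and $\sigma^2 \triangleq \E\|\bn_1\|^2 < \infty$. Set $\bm{P}_j \triangleq (\alpha I + \bar\alpha \bA_j)^{\T} = \alpha I + \bar\alpha \bA_j^{\T}$, which is row-stochastic --- hence a non-expansion in $\ell_\infty$ --- because $\bA_j$ is left-stochastic. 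In vector form \eqref{eqn:iterate} reads $\bS_i = \bm{P}_i(\bS_{i-1}+\bL_i)$, and unrolling it gives
\[
\bS_i \;=\; \sum_{s=1}^{i}\bm{P}_i\bm{P}_{i-1}\cdots\bm{P}_s\, d \;+\; \sum_{s=1}^{i}\bm{P}_i\cdots\bm{P}_s\,\bn_s \;+\; \bm{P}_i\cdots\bm{P}_1\bS_0 .
\]
Rewriting $\bw_n^{(k)}$ in terms of the $\bm{P}_j$ shows that the $k$-th entry of the first sum is exactly $\sum_{n=1}^{i}\langle\bw_n^{(k)},d\rangle$, whose $\tfrac1i$-normalization converges a.s.\ to $\langle\pi,d\rangle$ by the Corollary (applied for each $k$); and $\tfrac1i\,\bm{P}_i\cdots\bm{P}_1\bS_0 \to 0$ since $\bm{P}_i\cdots\bm{P}_1$ is row-stochastic and $\bS_0$ is finite. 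It therefore suffices to prove that $\bN_i \triangleq \tfrac1i\sum_{s=1}^{i}\bm{P}_i\cdots\bm{P}_s\,\bn_s \asto 0$.

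For this, I would first bound the second moment of $\bN_i$. Conditioning on the $\sigma$-algebra generated by all the selection matrices $\{\bA_j\}_{j\ge1}$, the summands $\bm{P}_i\cdots\bm{P}_s\bn_s$, $1\le s\le i$, are independent and zero-mean --- the selections are independent of the i.i.d.\ data --- and $\|\bm{P}_i\cdots\bm{P}_s\bn_s\| \le \sqrt{K}\,\|\bn_s\|$ by row-stochasticity; hence $\E\|\bN_i\|^2 = O(\sigma^2/i)$. Along the subsequence $i = r^2$ the Markov bound $\bbP(\|\bN_{r^2}\|>\epsilon) = O\!\big(1/(\epsilon^2 r^2)\big)$ is summable, so $\bN_{r^2}\asto 0$ by Borel--Cantelli. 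To fill the gaps I would use, for $r^2 \le m < (r+1)^2$, the carry-over identity
\[
\bN_m \;=\; \frac{r^2}{m}\,\bm{P}_m\bm{P}_{m-1}\cdots\bm{P}_{r^2+1}\,\bN_{r^2} \;+\; \frac1m\sum_{s=r^2+1}^{m}\bm{P}_m\cdots\bm{P}_s\,\bn_s .
\]
The first term is dominated by a constant times $\|\bN_{r^2}\|$, again by row-stochasticity, and so vanishes along the subsequence; the second is at most $\tfrac{C\,(m-r^2)}{m}\max_{s\le (r+1)^2}\|\bn_s\| \le \tfrac{C\,(2r+1)}{r^2}\max_{s\le (r+1)^2}\|\bn_s\|$, and square-integrability yields, via Borel--Cantelli ($\sum_s\bbP(\|\bn_s\|^2>\epsilon s)\le \E\|\bn_1\|^2/\epsilon<\infty$), that $\|\bn_s\| = o(\sqrt{s})$ and hence $\max_{s\le N}\|\bn_s\| = o(\sqrt{N})$ a.s.; the second term is then $o(r^2)/r^2 = o(1)$. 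Combining, $\bN_m\asto 0$, so adding the three pieces of $\tfrac1i\bS_i$ gives $\tfrac1i\bS_{k,i}\asto\langle\pi,d\rangle$ for every $k$.

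The step I expect to be the real obstacle is this last one, and the reason is structural: $\bS_{k,i} = \sum_{n}\langle\bw_n^{(k)},\bL_{i-n+1}\rangle$ is a \emph{triangular} array rather than a genuine running sum, since moving from time $i$ to $i+1$ inserts a fresh factor $\alpha I + \bar\alpha\bA_{i+1}$ into every weight $\bw_n^{(k)}$. Consequently a martingale strong law does not apply directly (the natural martingale has increments growing linearly in $i$), and the subsequence-plus-bounded-increment trick of Lemma~1 does not transfer either, because there the increments were controlled only through $\|\bw-\pi\|_1\le2$, whereas here the driving noise $\bn_s$ is unbounded. The way around this is to propagate the subsequence limit at the level of the accumulator \emph{vector} $\bN_i$ instead of the scalar weighted sums: the map $\bN_{r^2}\mapsto \bm{P}_m\cdots\bm{P}_{r^2+1}\bN_{r^2}$ is an $\ell_\infty$-non-expansion, so it carries $\bN_{r^2}\to0$ across the gaps, while the genuinely new increments are absorbed by the $o(\sqrt{s})$ a.s.\ growth of $\|\bn_s\|$ --- which is precisely where the square-integrability hypothesis of the lemma is used (a first-moment assumption would give only $o(s)$, too weak for the $r^2$ spacing).
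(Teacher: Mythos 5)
Your proof is correct, but it takes a genuinely different route from the paper. The paper removes the triangular-array obstacle you identify by a distributional index reversal: since $(\bL_i,\bA_i)$ are i.i.d. and mutually independent, relabeling $i-n+1\mapsto n$ turns $\tfrac1i\sum_n\langle\bw_n,\bL_{i-n+1}-d\rangle$ into $\tfrac1i\sum_n\langle\tilde\bw_n,\bL_n-d\rangle$ with $\tilde\bw_n$ adapted to the natural filtration; then Kronecker's lemma reduces the claim to a.s. convergence of $M_i=\sum_{n\le i}\tfrac1n\iprod{\tilde\bw_n}{\bL_n-d}$, which is a square-integrable martingale, and the $L^2$ martingale convergence theorem finishes it. You instead keep the triangular structure and attack it directly: mean/fluctuation splitting, conditional (on the selection matrices) independence of the summands to get $\E\|\bN_i\|^2=O(1/i)$, Chebyshev plus Borel--Cantelli along $i=r^2$, and the carry-over identity $\bN_m=\tfrac{r^2}{m}\bm{P}_m\cdots\bm{P}_{r^2+1}\bN_{r^2}+\tfrac1m\sum_{s>r^2}\bm{P}_m\cdots\bm{P}_s\bn_s$, exploiting row-stochasticity of the backward products to propagate the subsequence limit and the a.s. $o(\sqrt{s})$ growth of $\|\bn_s\|$ to absorb the new terms; all of these steps check out (including the $\sqrt K$ norm comparisons and the handling of $\bS_0$, which the paper suppresses). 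What each approach buys: yours is more elementary --- no exchangeability argument, no martingale convergence theorem --- and gives an explicit variance rate; the paper's is shorter and, more importantly, is reused verbatim in the next step, where Kolmogorov truncation (Lemma 3(iii) feeding the square-integrability bound of $M_i$) relaxes the hypothesis to finiteness of $d$ for Theorem 1. Your argument leans on second moments twice (the summable Chebyshev bound over the $r^2$ spacing and the $o(\sqrt s)$ increment bound), so it would not transfer to that truncated, first-moment extension without substantial modification --- a limitation you yourself flag accurately.
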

\begin{proof}
It is sufficient to show that 
\begin{equation}
 \frac 1 i \sum_{n = 1}^{i} \langle \bw_{n},\bL_{i-n+1}-d\rangle \asto 0.
\end{equation}
Note that $\bL_i$ is independent of $\bw_j$ for all $j$; and of all $\bL_j$ except itself. The same holds for the $\bA_{i}$'s as well. Therefore, for convenience, let us perform an index change $i-n+1 \mapsto n$ on $(\bL_i, \bA_{i})$ to have the equivalent statement
\begin{equation}\label{eqn:index_changed}
   \frac 1 i \sum_{n = 1}^{i} \langle \tilde \bw_{n},\bL_{n}-d\rangle \asto 0
\end{equation}
where $\tilde \bw_n \triangleq (\alpha I + \bar\alpha \bA_n) \tilde \bw_{n-1}$ and $\tilde \bw _0= [1,\dots ,0]^T$ according to the index change described above. Kronecker's lemma \cite[Chapter 12]{Martingales} implies that it is sufficient to check if
\begin{equation}\label{eqn:kronecker}
    M_i \triangleq \sum_{n=1}^i \frac 1 n \iprod{\tilde \bw_{n}}{\bL_{n}-d} \text{ converges a.s.}
\end{equation}
Observe that $M_i$ is a martingale with respect to the filtration $\{\tilde\cF_i$\}, where $\tilde\cF_i \triangleq \sigma\big((\bA_n,\bL_n)_{n \leq i}\big)$. This is because
\begin{equation}
\begin{split}
    \E[M_{i+1}|\tilde\cF_i] &= M_i + \tfrac{1}{i+1}\E[\iprod{\tilde \bw_{i+1}}{\bL_{i+1}-d}|\tilde\cF_i]\\
    &\stackrel{(a)}{=} M_i + \tfrac{1}{i+1}\iprod{\E[\tilde \bw_{i+1}|\tilde\cF_i]}{\E[\bL_{i+1}-d|\tilde\cF_i]}\\
    &\stackrel{(b)}{=} M_i\raisetag{10pt}
\end{split}
\end{equation}
where $(a)$ from conditional independence of $\bL_{i+1}$ and $\tilde\bw_{i+1}$ and $(b)$ follows from $\E[\bL_{i+1}-d|\tilde\cF_i]$ being equal to the all-zero vector. Furthermore, $M_i$ is a square-integrable martingale as
\begin{equation}\label{eqn:sq_integrable}
\begin{split}
    \sup_i \E[M_i^2] &= \sum_{n=1}^\infty \frac {\E[\iprod{\tilde\bw_n}{\bL_n {-d}}^2]}{n^2}\\
    &\stackrel{(a)}{\leq} \sum_{n=1}^\infty \frac {\E[\lVert\tilde\bw_n\rVert^2\lVert\bL_n{-d}\rVert^2]}{n^2} < \infty
\end{split}
\end{equation}
where $(a)$ follows from Cauchy-Schwarz inequality. The final expression is bounded since $\lVert\tilde\bw_n\rVert^2 \leq 1$ and $\E[\lVert\bL_n\rVert^2] < \infty $ by our assumption. The above allows the use of Martingale Convergence Theorem \cite[Chapter 11]{Martingales} and therefore $M_i$ a.s. converges. The proof is complete since we have shown \eqref{eqn:kronecker}.
\end{proof}
Our final aim is to relax the square-integrability condition $\E[(\bL_{k,i})^2] < \infty$ to $\E[|\bL_{k,i}|]<\infty$. We show a sufficient condition for the latter. Note that there exists a $C' > 0$ such that $|x| \leq C'e^{-x} + x$ for all $x$. Hence $\E[|\bL_{k,i}|] \leq C'\E[e^{-\bL_{k,i}}] + \E[\bL_{k,i}] = C' + D(L_k(.|\theta^{\circ}) || L_k(.|\theta))$. Therefore, if all the elements of $d$ are finite, this implies $\E[|\bL_{k,i}|]<\infty$ for all $k$.

To extend the result \eqref{eqn:desired} under absolute integrability, we use the following lemma.
\begin{lemma}[{Kolmogorov's Truncation Lemma \cite[Chapter 12]{Martingales}}]\label{lem:kolmogorov}
Consider i.i.d. $X_{1}, X_{2}, \ldots$ where $\E[|X_1|]<\infty$. Let $\mu\triangleq \E[X_1]$ be the common mean. Define
\begin{equation}\label{eqn:truncation}
Y_{i}\triangleq\begin{cases}
X_i,&|X_i| \leq i\\
0, &|X_i| > i
\end{cases}.
\end{equation}
Then
(i) $\mu_i \triangleq \E[Y_i] \rightarrow \mu$;
(ii) $(Y_i-X_i) \asto 0$;
(iii) $\sum i^{-2}\E[Y_i^2]<\infty$.
\end{lemma}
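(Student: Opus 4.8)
All three claims are instances of the classical truncation argument, and I would dispatch them in the order stated. For part (i), since the $X_i$ are identically distributed one may write $\mu_i = \E[X_1\,\mathbf{1}\{|X_1|\le i\}]$; the integrand is bounded in absolute value by the integrable envelope $|X_1|$ and converges pointwise to $X_1$ as $i\to\infty$, so I would invoke the dominated convergence theorem to conclude $\mu_i\to\E[X_1]=\mu$.

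For part (ii), the key observation is that $\{Y_i\ne X_i\}=\{|X_i|>i\}$, so it suffices to show $\sum_i\bbP(|X_i|>i)<\infty$ and then apply the Borel--Cantelli lemma to deduce that almost surely $Y_i=X_i$ for all large $i$, hence $Y_i-X_i\asto 0$. Because the $X_i$ are identically distributed and $t\mapsto\bbP(|X_1|>t)$ is nonincreasing, one has $\sum_{i\ge 1}\bbP(|X_i|>i)=\sum_{i\ge 1}\bbP(|X_1|>i)\le\int_0^\infty\bbP(|X_1|>t)\,dt=\E[|X_1|]<\infty$, which closes the argument.

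For part (iii), I would start from $\E[Y_i^2]=\E[X_1^2\,\mathbf{1}\{|X_1|\le i\}]$ and push the sum inside the expectation (legitimate by Tonelli, since every term is nonnegative):
\begin{equation}
\sum_{i=1}^\infty\frac{\E[Y_i^2]}{i^2}=\E\!\left[X_1^2\sum_{i\ge\max\{1,|X_1|\}}\frac{1}{i^2}\right].
\end{equation}
The remaining task is to control the inner sum pointwise. Using the elementary tail bound $\sum_{i\ge n}i^{-2}\le 2/n$, valid for every integer $n\ge 1$, one checks — separating the cases $|X_1|\le 1$ (where $X_1^2\le|X_1|$) and $|X_1|>1$ — that $X_1^2\sum_{i\ge\max\{1,|X_1|\}}i^{-2}\le 2|X_1|$ surely, whence the right-hand side is at most $2\E[|X_1|]<\infty$.

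The proof presents no genuine obstacle. The steps that warrant a moment's care are the sum-to-integral comparison in (ii) — the classical equivalence between $\E[|X_1|]<\infty$ and summability of the tail probabilities $\bbP(|X_1|>i)$ — and, in (iii), the Tonelli interchange together with the pointwise tail estimate on $\sum_{i\ge n}i^{-2}$; I expect (iii) to be the most involved of the three, though still routine. As a remark, dominated convergence in (i) can be bypassed by noting $|\mu_i-\mu|\le\E[|X_1|\,\mathbf{1}\{|X_1|>i\}]$ and invoking the same tail-probability fact used in (ii).
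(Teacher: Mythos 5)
Your proof is correct: the paper does not prove this lemma at all (it is quoted verbatim from the cited reference \cite[Chapter 12]{Martingales}), and your argument --- dominated convergence for (i), the tail-sum bound $\sum_i \bbP(|X_1|>i)\le\E[|X_1|]$ plus Borel--Cantelli for (ii), and the Tonelli interchange with the $\sum_{i\ge n}i^{-2}\le 2/n$ estimate for (iii) --- is exactly the standard textbook proof found there. No gaps; nothing further is needed.
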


We truncate the vector $\bL_i$ elementwise and obtain $\bK_i$, i.e., we relate $\bL_{k,i}$ to $\bK_{k,i}$ as in \eqref{eqn:truncation}. Let $d_i \triangleq \E[\bK_{i}]$ and repeat the same steps in the proof of Lemma \ref{lem:martingale} with $(\bK_{i},d_i)$ instead of $(\bL_{i},d)$. Observe that the martingale $M_i$ is square-integrable because the sum in \eqref{eqn:sq_integrable} is finite according to (iii) of Lemma 3. Then we have
\begin{equation}
   \frac 1 i \sum_{n = 1}^{i} \langle \tilde\bw_{n},\bK_{n}-d\rangle \asto 0.
\end{equation}
Finally, (i) and (ii) in Lemma \ref{lem:kolmogorov} together imply
\begin{equation}
    \frac 1 i \sum_{n = 1}^{i} \langle \tilde\bw_{n},\bL_{n}-d\rangle \asto 0,
\end{equation}
which yields the following conclusion.
\begin{theorem}[Asymptotic Convergence Rate]\label{thm:conv_rate}
Suppose all elements of $d$ are finite. Then for all $k$,
\begin{equation}\tfrac 1 i \bS_{k,i} \asto \langle\pi,d\rangle.\end{equation}
\end{theorem}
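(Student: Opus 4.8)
The statement is, in effect, what one obtains by assembling the ingredients already in place, so the plan is to combine Lemma~1 (through its Corollary) with the martingale-plus-truncation argument carried out just above. First I would take the expansion $\bS_{k,i}=\sum_{n=1}^{i}\iprod{\bw_n^{(k)}}{\bL_{i-n+1}}$ (plus, when $\bmu_{k,0}$ is not uniform, an extra term $\iprod{\bw_i^{(k)}}{\bS_0}$), divide by $i$, and split the driving sequence into its mean and its fluctuation:
\begin{equation}
\frac1i\bS_{k,i}=\frac1i\sum_{n=1}^{i}\iprod{\bw_n^{(k)}}{d}+\frac1i\sum_{n=1}^{i}\iprod{\bw_n^{(k)}}{\bL_{i-n+1}-d}+\frac1i\iprod{\bw_i^{(k)}}{\bS_0}.
\end{equation}
The first term converges a.s.\ to $\iprod{\pi}{d}$ by the Corollary to Lemma~1; the last term is $O(1/i)$ because $\bw_i^{(k)}$ is a probability vector while $\bS_0$ is a fixed finite vector, its finiteness following from $\bmu_{k,0}(\theta)>0$. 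Everything therefore reduces to showing that the middle term tends to $0$ almost surely.

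For the middle term I would invoke the chain already written down before the theorem. The hypothesis that every entry of $d$ is finite gives $\E[|\bL_{k,i}|]<\infty$ for all $k$ via the elementary bound $|x|\le C'e^{-x}+x$. After the index change $i-n+1\mapsto n$ on the i.i.d.\ pairs $(\bL_n,\bA_n)$ --- a relabeling that leaves the law of the statistic unchanged for each fixed $i$, since it only reverses a finite i.i.d.\ block --- the middle term becomes the Ces\`aro average $\tfrac1i\sum_{n=1}^{i}\iprod{\tilde\bw_n}{\bL_n-d}$, and Lemma~\ref{lem:martingale} together with Kolmogorov's Truncation Lemma (Lemma~\ref{lem:kolmogorov}) shows this tends a.s.\ to $0$: truncate $\bL_i$ componentwise to $\bK_i$, re-run the proof of Lemma~\ref{lem:martingale} with $\bK_i$ and its mean $d_i$ replacing $\bL_i$ and $d$ --- legitimate because part (iii) of Lemma~\ref{lem:kolmogorov} makes the martingale $M_i$ square-integrable, yielding $\tfrac1i\sum_{n=1}^{i}\iprod{\tilde\bw_n}{\bK_n-d_n}\asto0$ --- and then restore $\bL_i$ using parts (i) and (ii) of Lemma~\ref{lem:kolmogorov}. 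Substituting back gives $\tfrac1i\bS_{k,i}\asto\iprod{\pi}{d}$ for every $k$.

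The decomposition and the two easy terms are routine; the real work lives inside the lemmas. Of these, the step I expect to demand the most care is the truncation, namely verifying that weakening square-integrability of the log-likelihood ratios to plain integrability still delivers \emph{almost sure} rather than merely in-probability convergence. This reduces to controlling, after Ces\`aro averaging, the two corrections $\iprod{\tilde\bw_n}{\bL_n-\bK_n}$ and $\iprod{\tilde\bw_n}{d_n-d}$; since $\tilde\bw_n$ is a probability vector, each is bounded in absolute value by the largest-magnitude entry of $\bL_n-\bK_n$ (resp.\ $d_n-d$), and these vanish --- a.s.\ by part (ii) and deterministically by part (i) of Lemma~\ref{lem:kolmogorov} --- so their running averages vanish as well, leaving $\tfrac1i\sum_{n=1}^{i}\iprod{\tilde\bw_n}{\bL_n-d}\asto0$.
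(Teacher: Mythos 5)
Your proposal is correct and follows essentially the same route as the paper: the decomposition into $\tfrac1i\sum_n\iprod{\bw_n^{(k)}}{d}$ (handled by the Corollary to Lemma~1) plus the fluctuation term, which is killed via the index change, the Kronecker/martingale argument of Lemma~\ref{lem:martingale}, and Kolmogorov's truncation lemma exactly as in the text. The only additions are welcome bookkeeping the paper leaves implicit, namely the $O(1/i)$ initial-belief term $\iprod{\bw_i^{(k)}}{\bS_0}$ and the explicit Ces\`aro treatment of the corrections $\iprod{\tilde\bw_n}{\bL_n-\bK_n}$ and $\iprod{\tilde\bw_n}{d_n-d}$.
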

{From Theorem 1, it is immediate that all agents learn the truth eventually if $\langle\pi,d\rangle > 0$. Since all elements of $\pi$ are positive, this condition --- also called global identifiability --- holds if at least one element of $d$ is strictly positive.}
\begin{corollary}[Truth Learning]\label{cor:truth_learning}
Suppose all elements of $d$ are finite and at least one element of $d$ is strictly positive. Then for all $k$,
\begin{equation}
    \bmu_{k,i}(\theta^\circ) \asto 1.
\end{equation}
\end{corollary}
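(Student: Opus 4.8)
The plan is to read off the corollary from Theorem~\ref{thm:conv_rate} together with the finiteness of $\Theta$. Recall that the log-belief ratio $\bS_{k,i}$ and the divergence vector $d$ both depend on the fixed competing hypothesis $\theta\in\Theta\setminus\{\theta^\circ\}$; write $d(\theta)$ to make this explicit, so that the $k$-th entry of $d(\theta)$ is $D_{\text{KL}}(L_k(\cdot|\theta^\circ)\,\|\,L_k(\cdot|\theta))\ge 0$. First I would fix such a $\theta$. By hypothesis every entry of $d(\theta)$ is finite, so Theorem~\ref{thm:conv_rate} applies and gives, for every agent $k$,
\[
\tfrac1i\,\bS_{k,i}\ \asto\ \langle\pi,d(\theta)\rangle .
\]
Because $A$ is strongly connected (Assumption~1), its Perron vector $\pi$ has strictly positive entries; combined with $d(\theta)\ge 0$ and the assumption that at least one entry of $d(\theta)$ is strictly positive, this forces $\langle\pi,d(\theta)\rangle>0$. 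Hence $\bS_{k,i}\to+\infty$ almost surely, i.e.\ $\bmu_{k,i}(\theta)/\bmu_{k,i}(\theta^\circ)\asto 0$.

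Next I would remove the dependence on $\theta$. Since $\Theta$ is finite there are only finitely many $\theta\neq\theta^\circ$, so intersecting the corresponding probability-one events produces a single event of probability one on which $\bmu_{k,i}(\theta)/\bmu_{k,i}(\theta^\circ)\to 0$ for every $k$ and every $\theta\neq\theta^\circ$ simultaneously. On that event, using $\sum_{\theta\in\Theta}\bmu_{k,i}(\theta)=1$ and the fact that $\bmu_{k,i}(\theta^\circ)>0$ for all $i$ (the initialization $\bmu_{k,0}(\cdot)>0$ is preserved by the Adapt/Combine steps \eqref{eq:dif_adapt_step}--\eqref{eq:dif_combine_step}), I would write
\[
\bmu_{k,i}(\theta^\circ)\;=\;\frac{1}{\,1+\sum_{\theta\neq\theta^\circ}\bmu_{k,i}(\theta)/\bmu_{k,i}(\theta^\circ)\,}\ \longrightarrow\ \frac{1}{1+0}\;=\;1 ,
\]
which is the claimed almost-sure convergence.

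There is no real obstacle here beyond bookkeeping: the substantive work was already done in Theorem~\ref{thm:conv_rate}. The one place that deserves a sentence of justification is the strict inequality $\langle\pi,d(\theta)\rangle>0$, which is exactly where strong connectivity enters — if some coordinate of $\pi$ vanished, a hypothesis distinguishable only at such ``unheard'' agents might never be rejected — together with the elementary observation that finitely many null events still have null union, which is what lets one pass from the per-$\theta$ statement to the uniform statement needed to conclude $\bmu_{k,i}(\theta^\circ)\to 1$.
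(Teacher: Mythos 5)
Your proposal is correct and follows the same route as the paper: the paper derives the corollary immediately from Theorem~\ref{thm:conv_rate} by noting that positivity of the Perron vector entries plus one strictly positive divergence gives $\langle\pi,d\rangle>0$, hence $\bS_{k,i}\to\infty$ for each $\theta\neq\theta^\circ$ and $\bmu_{k,i}(\theta^\circ)\to 1$. Your added bookkeeping (making the $\theta$-dependence of $d$ explicit, intersecting the finitely many probability-one events, and writing $\bmu_{k,i}(\theta^\circ)$ as the reciprocal of one plus the belief ratios) is exactly the justification the paper leaves implicit.
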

We conclude this section by noting a straightforward extension of our result. Allow each agent to have a different confidence weight $\alpha_k$. Extensions of Theorem \ref{thm:conv_rate} and Corollary \ref{cor:truth_learning} can be obtained as follows: Define the diagonal matrix $J \triangleq [j_{\ell k}]$ where $j_{kk} = \alpha_k$; and 0 otherwise. Replace $(\alpha I + \bar\alpha \bA_n)$ with $(J + \bA_n(I-J))$ and $\pi$ with $\tilde \pi$, the Perron vector of $(J + A(I-J))$. By following the same steps as above, the extension will be immediate.
\section{A Special Case: Replacement}\label{sec:replacement}
In this section we study the special case $\alpha = 0$, where agents replace their beliefs with their neighbors'. At first sight, it is not intuitive whether all agents would learn the truth eventually. This is because the truthful beliefs might be lost upon replacement. However, all the results of Section \ref{sec:analysis} hold for this case as well. Furthermore, when agents replace their beliefs, Theorem \ref{thm:conv_rate} has a much shorter proof, which we provide below.
\begin{proof}[Short Proof of Theorem \ref{thm:conv_rate} for $\alpha = 0$]Again assume $k = 1$.
Observe in this case $\tilde \bw_n$'s --- recall the index change $i-n+1 \mapsto n$ in Lemma 2 --- are vectors with their $\ell_n^{\text{th}}$ element being one for some $\ell_n$ and the others being zero. Now consider a Markov chain governed by the transition kernel $A$, with the state space $\{1,\dots,K\}$ and observe that the random variable $\bell_n$ is the state at time $n$ with the initial state being $1$. Now, write down $\frac 1 i \bS_i$ as
\begin{equation}\label{eqn:simple}
\begin{split}
    \frac 1 i \bS_i &= \frac 1 i \sum_{n = 1}^{i} \langle \tilde \bw_{n},\bL_{n}\rangle = \sum_{k= 1}^K \frac 1 i\sum_{n:\bell_n = k}\bL_{k,n}\\
    &= \sum_{k= 1}^K \frac {\bmm_i^{(k)}} i \frac 1 {\bmm_i^{(k)}}\sum_{n:\bell_n = k}\bL_{k,n}
\end{split}
\end{equation}
where ${\bmm_i^{(k)}}$ is the number of visits to state $k$ in $i$ transitions. Since the Markov chain governed by $A$ is communicating and aperiodic, it is known from a standard result in Renewal theory \cite[Chapter 3]{ross1996stochastic} that $\frac {\bmm_i^{(k)}} i \asto \pi_{k}$. This also implies $\bmm_i^{(k)}\asto \infty$ and therefore $\frac 1 {\bmm_i^{(k)}}\sum_{n:\bell_n =k}\bL_{k,n} \asto d_{k}$ by the strong law of large numbers. Combining these with \eqref{eqn:simple}, we see \eqref{eqn:desired} holds.
\end{proof}

\begin{figure*}[hbt!]
	\centering
	\subfloat[a][]{
  \includegraphics[width=.2\linewidth]{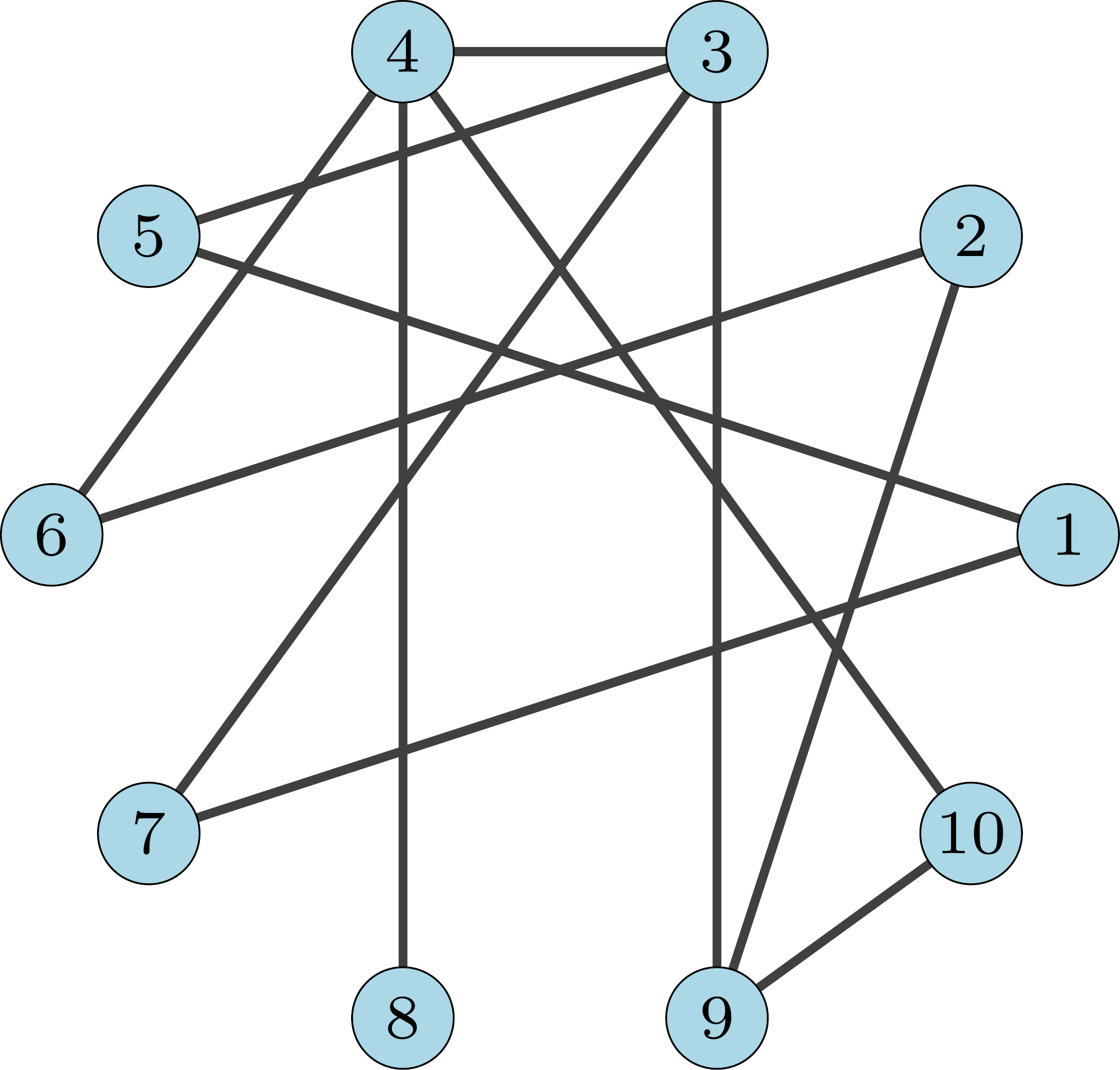}\label{fig:original network}\label{fig:network}
  }\hfil
  \subfloat[b][]{
  \includegraphics[width=.25\linewidth]{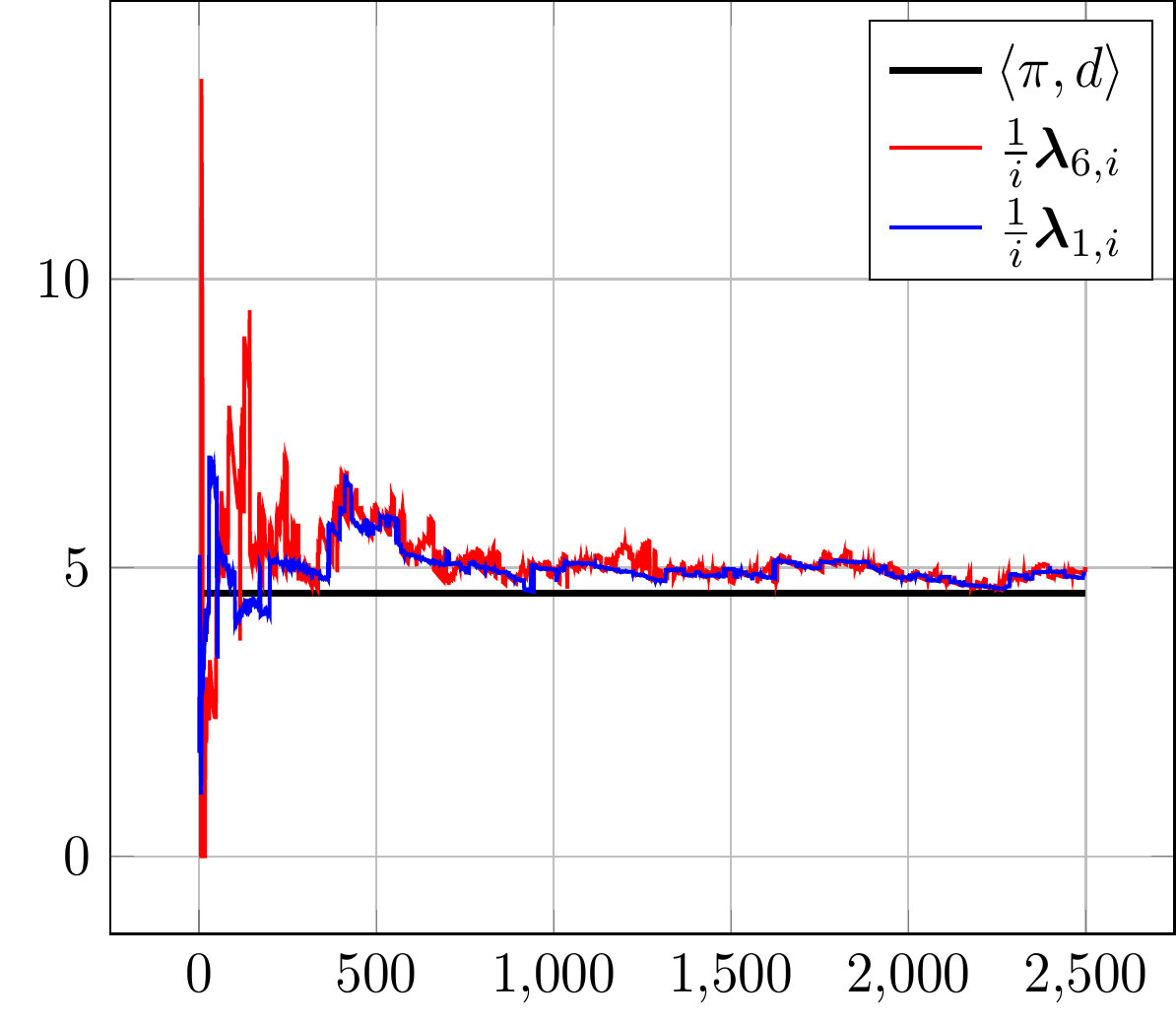}\label{fig:sample_paths}
  }\hfil
  \subfloat[c][]{
  \includegraphics[width=.25\linewidth]{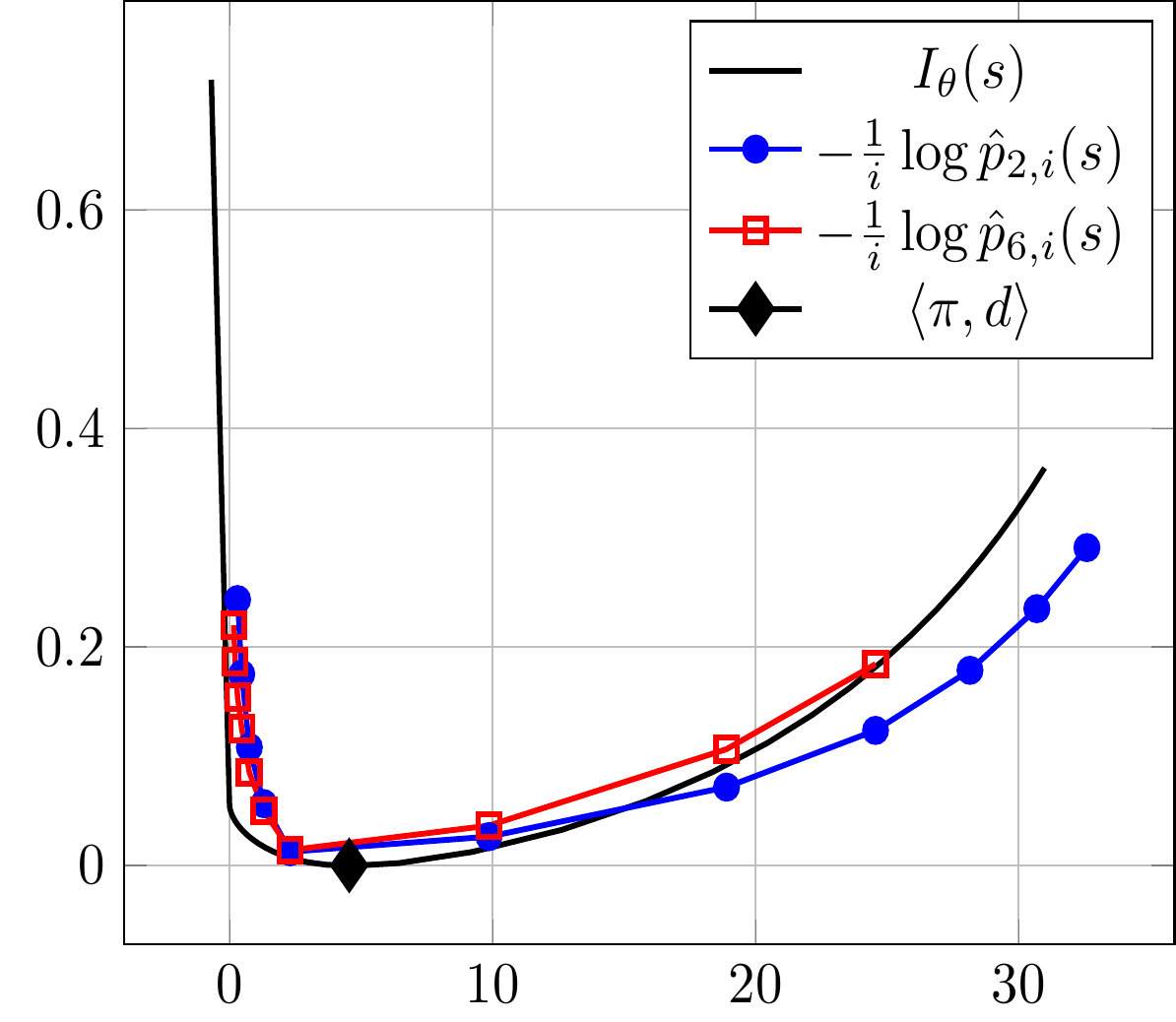}\label{fig:rate_fnc}
  }\caption{(a) The network with $K=10$ nodes and 12 edges. (b) Sample paths of $\frac 1 i \bS_{1,i}$ and $\frac 1 i \bS_{6,i}$, shown by blue and red curves respectively. The asymptotical convergence rate $\iprod{\pi}{d}$ is drawn as a straight horizontal line at 4.55. (c) Large deviation estimates. $I_\theta$ is found numerically and drawn as the black curve; which touches zero exactly at $\iprod{\pi}{d}$. The Monte-Carlo estimates for $\hat p_{2,i}(s)$ and $\hat p_{6,i}(s)$ are drawn as blue and red curves respectively.}
\end{figure*}

Suppose the truth learning terminates at step $i$ for a large $i$. We now shift our attention to the event that a node makes an error upon termination. More precisely, we are interested in evaluating the probability of this event. To this end, we emphasize an important observation from the proof above: $\tilde \bw_i$ evolves according to a finite-state Markov chain. Moreover, $\frac 1 i \bS_i$ can be viewed as the average reward of a Markov Reward Process. Knowing the underlying dependence structure of $\tilde \bw_i$'s allows us to invoke the known results in Large Deviations Principle (LDP). We have a special case of Gärtner-Ellis theorem that implies --- see also \cite{7576662,7491269}:
\begin{theorem}[Theorem 3.1.2 in \cite{dembo2009large}]\label{thm:large_dev}
Set $A(t) \triangleq \big[a_{\ell k}\E[e^{t\bL_{\ell}}]\big]$ and let $\Lambda(t)$ be the Perron-Frobenius eigenvalue of $A(t)$. Then for any $\Gamma \subseteq \mathbb{R}$,
\begin{align}
    -\inf_{s\in \Gamma^\circ} I(s) &\leq \liminf_{i \to \infty} \frac 1 i \log \bbP\Big(\frac 1 i \bS_i \in \Gamma\Big)\label{eqn:ldp1}\\
    &\leq \limsup_{i \to \infty} \frac 1 i \log\bbP\Big(\frac 1 i \bS_i \in \Gamma\Big) \leq -\inf_{s\in \overline{\Gamma}} I(s)\label{eqn:ldp2}
\end{align}
where $I(s)\triangleq \sup_{t\in \mathbb{R}} st - \log \Lambda(t)$ is the Legendre-Fenchel transform of $\log\Lambda(t)$ and $\Gamma^\circ$, $\overline{\Gamma}$ denote interior and closure of $\Gamma$ respectively.
\end{theorem}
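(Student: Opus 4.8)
The plan is to recognize the final statement (Theorem~\ref{thm:large_dev}) as a direct application of the Gärtner--Ellis theorem to the specific Markov-modulated sum $\bS_i = \sum_{n=1}^i \bL_{\bell_n, n}$ identified in the short proof of Theorem~\ref{thm:conv_rate}. The key structural fact, already established, is that in the replacement case $\tilde\bw_n$ evolves as the indicator of a finite-state Markov chain $\bell_n$ with transition kernel $A$, and conditionally on the chain's trajectory the increments $\bL_{\bell_n,n}$ are independent with distribution depending only on the current state. So $\frac1i\bS_i$ is exactly the empirical time-average reward of a Markov additive process, which is the canonical setting for a Gärtner--Ellis large deviation principle. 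The heart of the argument is to compute the scaled cumulant generating function $\lim_{i\to\infty}\frac1i\log\E[e^{t\bS_i}]$ and show it equals $\log\Lambda(t)$, the log Perron--Frobenius eigenvalue of the tilted matrix $A(t) = [a_{\ell k}\E[e^{t\bL_\ell}]]$.

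First I would write the moment generating function as a matrix product: conditioning on the Markov trajectory and using independence of the rewards across time (and the i.i.d.\ across-time assumption on the data), one gets
\begin{equation}
\E\big[e^{t\bS_i}\big] = \mathbf{1}^{\T} A(t)^{i}\, \bw_0^{(k)}
\end{equation}
up to a boundary factor, where $A(t)$ carries the per-step tilting $\E[e^{t\bL_\ell}]$ on transitions out of state $\ell$. Here I must be slightly careful about whether the reward is collected at the source or the destination of a transition and about the initial/terminal terms, but these affect only a bounded multiplicative factor and so do not change the exponential rate. Next, because $A$ is irreducible and aperiodic (strongly connected with a self-loop), $A(t)$ is a nonnegative irreducible matrix for every finite $t$ in the domain where the entries $\E[e^{t\bL_\ell}]$ are finite; by Perron--Frobenius, $A(t)^i \sim \Lambda(t)^i\, v(t) u(t)^{\T}$ with strictly positive left/right eigenvectors, hence $\frac1i\log\E[e^{t\bS_i}] \to \log\Lambda(t)$. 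This limit exists as an extended real number for all $t\in\mathbb{R}$, is finite in a neighborhood of the origin (guaranteed by the finite-divergence hypothesis, which via the bound $|x|\le C'e^{-x}+x$ forces at least local finiteness of the log-MGF), convex, and differentiable on the interior of its effective domain by analytic perturbation theory for the simple Perron eigenvalue. These are exactly the hypotheses of the Gärtner--Ellis theorem, so invoking \cite[Theorem~3.1.2]{dembo2009large} yields \eqref{eqn:ldp1}--\eqref{eqn:ldp2} with $I$ the Legendre--Fenchel transform of $\log\Lambda$.

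The main obstacle is verifying the regularity conditions for Gärtner--Ellis rather than the matrix-product computation, which is routine. Specifically, one needs: (i) essential smoothness / steepness or at least the differentiability of $\log\Lambda$ on $\mathcal{D}^\circ_{\log\Lambda}$ — this follows from the simplicity of the Perron root and smoothness of $t\mapsto A(t)$ where the entries are finite, but one should note the domain could be a proper subinterval of $\mathbb{R}$ if some $\E[e^{t\bL_\ell}]$ blows up, and then only the weaker (lower-bound) half of the statement holds on the full line unless steepness is checked; (ii) $0$ lies in the interior of the domain, which is where the finiteness-of-$d$ assumption is used. A secondary subtlety worth a sentence is the coupling across agents: the data need not be independent across agents, but since at each time only the single chosen agent's likelihood ratio enters $\bS_i$, the relevant law is only the marginal of $\bL_{\ell}$ for each $\ell$ — this is precisely why, as claimed in the introduction, the large deviation rate depends only on the marginals and not on any inter-agent coupling, and it is what makes the matrix $A(t)$ well defined from marginal MGFs alone.
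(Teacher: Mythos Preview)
The paper does not supply its own proof of this statement: it is quoted verbatim as Theorem~3.1.2 of \cite{dembo2009large} and invoked as a black box, with only the preceding sentence ``We have a special case of G\"artner--Ellis theorem that implies'' as justification. Your proposal is therefore not competing with a proof in the paper but rather reconstructing the argument behind the cited reference, and it does so correctly: representing $\E[e^{t\bS_i}]$ as (essentially) $\mathbf{1}^{\T} A(t)^i \bw_0^{(k)}$, extracting $\log\Lambda(t)$ as the scaled cumulant generating function via Perron--Frobenius, and then checking the smoothness/steepness hypotheses needed for G\"artner--Ellis is exactly the route taken in \cite{dembo2009large} for additive functionals of finite-state Markov chains. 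Your remarks on the boundary terms being exponentially negligible, on the domain of finiteness of $\E[e^{t\bL_\ell}]$, and on why only the marginals of $\bL_\ell$ matter are all to the point and match the paper's one-line observation after the theorem that ``Theorem~\ref{thm:large_dev} only requires knowledge [of] the marginals.''

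In short: your approach is correct and is precisely the derivation underlying the cited result; the paper itself simply cites rather than proves, so there is nothing further to compare.
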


Note that Theorem \ref{thm:large_dev} only requires knowledge the marginals, i.e., $L_k(.|\theta)$; and $\E[e^{t\bL_{\ell}}]$ to be finite. Therefore, without any knowledge of the joint distribution of the data, the rate function $I(s)$ can be calculated. This allows to approximate the error probabilities of agents given their respective decision rules. For instance, suppose node $k$ decides based on a maximum-likelihood rule, i.e., believes $\argmax_{\theta}\bmu_{k,i}(\theta)$. Let $\cE_{k,i}$ denote the event that the node $k$ makes an error at time $i$, whose probability at $i\to\infty$ can be lower and upper bounded as
\begin{align}
    -\max_{\theta\neq\theta^\circ}\inf_{s < 0} I_\theta(s) &\leq \liminf_{i \to \infty} \frac 1 i \log \bbP(\cE_{k,i})\\
    &\leq \limsup_{i \to \infty} \frac 1 i \log \bbP(\cE_{k,i}) \leq -\max_{\theta\neq\theta^\circ}\inf_{s\leq 0} I_{\theta}(s)
\end{align}
where $I_{\theta}(s)$ is the rate function corresponding to $\bS_{k,i}(\theta)$. The bounds above do not depend on $k$.

	\section{Numerical Results}
	In this section we present numerical results based on the simulations performed over the network of $K= 10$ nodes in Figure \ref{fig:network}. We set $\alpha = 0$, i.e., we simulate the replacement algorithm of Section \ref{sec:replacement}, and aim to solve a binary hypothesis testing problem between $\theta^\circ$ and $\theta$. For simplicity, we assume that the data is independent across agents --- note that the rate function $I_\theta(s)$ is unaffected by such assumption. Moreover, node $k$ observes a Gaussian random variable with unit variance under each hypothesis and with zero mean under $\theta^\circ$; and with mean $\nu_k$ under $\theta$. We have set $\nu \triangleq [3, 8, 0, 0, 3, 0, 3, 0, 0, 0]$, so for instance $\nu_5 = 3$. Observe that $\theta^\circ$ can only be identified by nodes 1,2,5 and 7. The combination matrix $A$ is chosen according to a lazy Metropolis rule \cite{Metropolis1953}, namely we set $B \triangleq [b_{\ell k}]$ with $b_{\ell k} = \max\{\deg(\ell),\deg(k)\}^{-1}$ for $\ell \neq k$ and $b_{\ell \ell} = 1-\sum_{k\neq\ell}b_{\ell k}$. Then we set $A = \tfrac 1 2 I + \tfrac 1 2 B$.
	
	$A$ is symmetric, hence doubly stochastic; which implies $\pi_{k} = \tfrac 1 K$ for all $k$. Furthermore, we can straightforwardly calculate $d_k = E\Big[\frac{L_k(\bxi_{k}|\theta^\circ)}{L_k(\bxi_k|\theta)}\Big] = \tfrac 1 2 \nu_k^2$; which gives the asymptotic convergence rate as $\iprod{\pi}{d} = 4.55$. Figure \ref{fig:sample_paths} shows two sample paths corresponding to $\tfrac 1 i \bS_{1,i} $ and $\tfrac 1 i \bS_{6,i}$ for $i \leq 2500$; and is consistent with our theoretical results from Section \ref{sec:analysis} as the paths seem to converge to 4.55.
	

In Figure \ref{fig:rate_fnc}, we have drawn the rate function $I_\theta(s)$ as a black solid line. Note that $I_\theta(s)$ touches zero exactly at $s= 4.55$, indicated with the solid diamond. We have also obtained Monte Carlo estimates of deviations $p_{k,i}(s) \triangleq \bbP(\frac 1 i \bS_{k,i} > s)$ for $s > 4.55$ and $\bbP(\frac 1 i \bS_{k,i} < s)$ for $s < 4.55$; $k = 2,6$, $i = 2500$ to check if they fit in with $I_\theta(s)$. However, the rate function suggests that one should expect $p_{k,i}(s)$ to become exponentially small with $i$. Hence, standard Monte Carlo method requires that the number of experiments should be exponentially large with $i$, which is impractical. We therefore resort to an importance sampling method by using a tilted Markov chain and tilted Gaussians, where the tiltings depend on $s$. We omit the details of the tilting procedure, and refer the reader to \cite{Collamore, Nummelin} for details. Denote the tilted measure as $\bQ_s$. Then our Monte Carlo estimate {(for $s < 4.55$)} is 
\begin{equation}
\hat p_{k,i}(s) \triangleq \frac 1 N \sum_{n = 1}^N \frac{d\bbP^{(n)}}{d\bQ_s^{(n)}}\mathbbm{1}\{\lambda_{k,i}^{(n)} < s\}
\end{equation}
where the superscript $(n)$ denotes the $n^\text{th}$ realization under the tilted measure $\bQ_s$. $\frac{d\bbP^{(n)}}{d\bQ_s^{(n)}}$ is a measure change variable, i.e., Radon--Nikodym derivative, which turns out to be expressed as a product $\prod_{j = 1}^i f(x_j^{(n)}, w_j^{(n)})$ for some $f$. Note that $\E_{\bQ_s}[\hat p_{k,i}(s)] = p_{k,i}(s)$. {For $s > 4.55$, we replace the indicator function with $\mathbbm{1}\{\lambda_{k,i}^{(n)} > s\}$}. We performed $N = 60$ experiments to obtain each marker in Figure \ref{fig:rate_fnc}.


\section{Discussion}
{Under randomized collaborations, agents still learn the truth, and at the same rate compared to the standard algorithms. Although the asymptotic rate $\lim_{i\to \infty}\frac 1 i \bS_{k,i}$ does not depend on $\alpha$, the statistics of $\bS_{k,i}$ depend on $\alpha$ for finite $i$. For $\alpha = 0$, we provided a finite-time analysis based on large deviation estimates for finite-state Markov chains. However, for $\alpha > 0$, the corresponding Markov chain has a continuous state space and a finite-time analysis requires more advanced machinery.}

%
	
	\bibliographystyle{IEEEtran}
\bibliography{ref}
\end{document}